  \providecommand\BibTeX{{%
    \normalfont B\kern-0.5em{\scshape i\kern-0.25em b}\kern-0.8em\TeX}}}
\newtheorem{remark}{remark}
\begin{document}

\title{On a Geometric Interpretation Of the Subset Sum Problem}
\subtitle{ A Novel FPTAS for SSP }
%
\author{Marius Costandin}
\email{costandinmarius@gmail.com}
\begin{abstract}
For $S \in \mathbb{N}^n$ and $T \in \mathbb{N}$, the Subset Sum Problem (SSP) $\exists^? x \in \{0,1\}^n $ such that $S^T\cdot x  = T$ can be interpreted as the problem of deciding whether the intersection of the positive unit hypercube $Q_n = [0,1]^n$ with the hyperplane $S^T\cdot \left(x - \frac{S}{\|S\|^2 }\cdot T \right) = 0$ contains at least a vertex. In this paper, we give an algorithm of complexity $\mathcal{O}\left( \frac{1}{\epsilon}\cdot n^b \right)$, for some absolute constant $b$, which either proves that there are no vertices in a slab of thickness $\epsilon$ either finds a vertex in the slab of thickness $4\cdot \epsilon$. It is shown that any vertex $P$ in a slab of thickness $\epsilon$ meets $\left| \frac{S^T\cdot P}{T} - 1 \right| \leq \epsilon$, therefore making the proposed algorithm a FPTAS for the SSP. The results are  then applied to the study of the so called Simultaneous Subset-Sum Problem (SSSP). 
\end{abstract}

%

\keywords{Non-Convex Optimization, NP-Complete, Subset-Sum}


\maketitle

\section{Introduction}

This short paper is concerned with the study of the classical Subset Sum Problem. That is, for $m,n \in \mathbb{N}$, $S \in \mathbb{N}^n$ and $T \in \mathbb{N}$ one defines the problem:
\begin{align}\label{E1a}
\exists^? x \in \{0,1\}^n \hspace{0.5cm} \text{with} \hspace{0.5cm} S^T\cdot x = T. 
\end{align} where each element of $S$, $s_k \leq 2^m$ for $k \in \{1, \hdots, n\}$. 

As already stated in the paper abstract, the SSP can be interpreted as the problem of deciding whether the intersection of the positive unit hypercube $Q_n = [0,1]^n$ with the hyperplane $S^T\cdot \left(x - \frac{S}{\|S\|^2 }\cdot T \right) = 0$ contains at least a vertex. It can be shown that the SSP can be reduced to the Subset Partition Problem (SPP), i.e. a particular instance of the SSP where the target $T = \frac{S^T\cdot 1_{n \times 1}}{2}$. As such (\ref{E1a}) becomes:
 
\begin{align}\label{E2a}
\exists^? x \in \{0,1\}^n \hspace{0.5cm} \text{with} \hspace{0.5cm} S^T\cdot \left(x - \frac{1}{2} \cdot 1_{n\times 1}\right) = 0
\end{align} 

On the other hand, looking at (\ref{E1a}) form the perspective the Dynamical Programming, one can see that that the difficulty lies in the size of $T = \frac{S^T\cdot 1_{n\times 1}}{2}$ which basically is determined by the size of the elements of $S$. 

As such, one can conclude that the difficulty of solving (\ref{E2a}) is due to the fact that the hyperplane normal vector $S$ can basically have many orientations.    The main idea of this work is to approximate the normal vector $S$ with the vector $U$ which has elements written on less bits, hence one can use the Dynamical Programming Algorithm to determine whether the new hyperplane $U^T \cdot \left(x - \frac{1}{2} \cdot 1_{n\times 1}\right) = 0 $ contains a vertex of the hypercube.  

Although it is obvious that the vector $U$ cannot have as many directions as $S$ would have, we show that at most $\mathcal{O}(n)$ small parallel translation of the hyperplane $U^T \cdot \left(x - \frac{1}{2} \cdot 1_{n\times 1}\right) = 0 $ contain the vertices that would belong to the hyperplane $S^T \cdot \left(x - \frac{1}{2} \cdot 1_{n\times 1}\right) = 0 $. It is therefore possible to use $U$ to state some results of a relaxation of the geometric interpretation of the subset sum: instead of asking if the unit hypercube has vertices on the infinitely thin hyperplane slicing it, we ask whether the hypercube has vertices on an "$\epsilon$-thick slice" that we shall call "slab" as defined below:  

\begin{definition}
Given the hyperplane $S^T\cdot (x - C) = 0$, we call a slab of the unit hypercube $\mathcal{Q}_n = [0,1]^n$ of thickness $\delta > 0$ the set 

\begin{align}
&\mathcal{S}(S,C, \delta) = \left\{x \in \mathcal{Q}_n\biggr| \frac{S}{\|S\|}^T \cdot \left( x -  C \right) \leq \frac{\delta}{2}, \frac{S}{\|S\|}^T\cdot \left( x - C\right) \geq -\frac{\delta}{2} \right\} 
\end{align} 

\end{definition}

The main results of the section MAIN RESULTS are summarized in the following theorem:
\begin{theorem}
For $d^{\star}$ given by (\ref{E11}), there exists an algorithm of complexity $\mathcal{O}\left( N\cdot n^{\frac{5}{2}}\right)$ which upon completion fulfils exactly one of the following alternatives:
\begin{enumerate}
\item proves that the slab $\mathcal{S}\left( S, \frac{1}{2}\cdot 1_{n \times 1},2\cdot d^{\star}   \approx  \frac{n}{N} \right)$ contains no vertex of the hypercube $Q_n = [0,1]^n$. 
\item finds a vertex of the hypercube  $Q_n = [0,1]^n$ in the slab \\$\mathcal{S}\left( S, \frac{1}{2}\cdot 1_{n \times 1}, 8\cdot d^{\star}   \approx  \frac{4\cdot n}{N} \right)$. 
\end{enumerate}
\end{theorem}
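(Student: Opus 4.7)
The plan is to follow the strategy announced in the introduction: replace the vector $S$ (whose entries may span $m$ bits) by a quantized surrogate $U$ whose entries are integers bounded by $\mathcal{O}(N)$, and then run a dynamic-programming sweep over the integer values of $U^{T} x$ compatible with $x$ lying in the target slab. The approximation budget should supply the factor $4$ between the two slab thicknesses $2 d^{\star}$ and $8 d^{\star}$, while the DP accounts for the bulk of the running time.

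First I would discretize. Rescaling $S$ by $N / \|S\|_{\infty}$ and rounding componentwise produces a vector $U \in \mathbb{Z}^{n}$ with $\|U\|_{\infty} \leq N$ and per-coordinate rounding error at most $\tfrac{1}{2}$. Consequently, every $x \in \{0,1\}^{n}$ satisfies the pointwise bound $\bigl|\, S^{T}(x - C) - (\|S\|_{\infty}/N)\, U^{T}(x - C)\,\bigr| \leq n\|S\|_{\infty}/(2N)$. With $d^{\star} \approx n/N$, the right-hand side has the same order as $d^{\star}\|S\|$, so membership of $x$ in $\mathcal{S}(S, C, 2 d^{\star})$ forces $U^{T}(x - C)$ into an interval of integer width $\mathcal{O}(n)$. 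I would then run the textbook subset-sum dynamic program for $U$ restricted to this window: for each integer $t$ in the window, recover one witness $x_{t} \in \{0,1\}^{n}$ with $U^{T} x_{t} = t + U^{T} C$, and test whether $x_{t}$ lies in $\mathcal{S}(S, C, 8 d^{\star})$ by a single inner product with $S$. If some witness passes, return it (Case~2); if none does, declare $\mathcal{S}(S, C, 2 d^{\star})$ empty (Case~1).

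Correctness of the negative branch rests on the following contrapositive: if $x_{0}$ is a vertex in $\mathcal{S}(S, C, 2 d^{\star})$, then by the preceding paragraph $U^{T} x_{0}$ sits inside the enumeration window, so the DP emits some witness $\tilde{x}$ with $U^{T} \tilde{x} = U^{T} x_{0}$; applying the discretization bound in reverse to $\tilde{x}$ shows $\bigl|S^{T}(\tilde{x} - C)\bigr| \leq d^{\star}\|S\| + n\|S\|_{\infty}/(2N)$, which should collapse to at most $4 d^{\star}\|S\|$ once the constants hidden in $d^{\star} \approx n/N$ are pinned down via~(\ref{E11}). This places $\tilde{x}$ in the $8 d^{\star}$ slab, as required.

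The main obstacle I anticipate is twofold. First, making the ratio of the two slab thicknesses come out to exactly $4$ (and not a larger constant) demands a careful calibration of the rounding radius and of $d^{\star}$ so that the approximation error of $S \leftrightarrow U$ and the DP window half-width sum to precisely the right multiple of $d^{\star}\|S\|$; the clean factor $4$ is essentially the whole technical content of the theorem. Second, a naive accounting of the DP yields only $\mathcal{O}(n^{2} N)$ rather than the claimed $\mathcal{O}(N \cdot n^{5/2})$, so the extra $n^{1/2}$ must hide either a block-by-block recovery of witnesses or an outer loop refining the window at $\mathcal{O}(\sqrt{n})$ scales. Reconciling this tighter bound with the rough estimate above, while ensuring that no spurious witness slips into the larger slab unchecked, is where I expect the proof to require the most care.
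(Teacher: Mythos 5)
Your proposal follows essentially the same route as the paper: quantize the normal $S$ to an integer vector $U$ with entries at most $N$, observe that any vertex of the $2d^{\star}$-slab forces $U^{T}x$ into a window of $\mathcal{O}(n)$ integer targets around $\frac{U^{T}\cdot 1_{n\times 1}}{2}$, solve each target by pseudo-polynomial dynamic programming, and use the two-sided approximation bound to place any returned witness in the $8d^{\star}$-slab. The only differences are cosmetic: the paper rounds $N\cdot s_k/\|S\|$ (Euclidean normalization, so $\sum_k u_k = \mathcal{O}(N\sqrt{n})$ and the $2n$ DP instances at $\mathcal{O}(N n^{3/2})$ each yield the stated $\mathcal{O}(N n^{5/2})$, which resolves your puzzlement about the exponent --- your $\mathcal{O}(n^{2}N)$ accounting is simply a different, equally valid upper bound under $\ell_{\infty}$ normalization), it controls the error geometrically via the angle between $U$ and $S$ and the radius $\tfrac{\sqrt{n}}{2}$ of the circumscribed ball rather than componentwise, and your extra verification of each witness against $S$ is harmless but unnecessary since the backward bound already guarantees every DP witness lands in the larger slab.
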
 where $N$ is the largest element in $U$, $\frac{d^{\star}}{\frac{n}{N}} \sim 1 $ as $\frac{n}{N^2} \to 0$. We take $N = n^c$ for some fixed $2 \leq c \in \mathbb{N}$ arbitrarily chosen by user.  

Since for  $P \in \{0,1\}^n \cap \mathcal{S}\left( S, \frac{1}{2}\cdot 1_{n \times 1}, \delta^{\star} \right)$ one has the existence of $\delta \in [-\delta^{\star}, \delta^{\star}]$ such that
\begin{align}
\frac{S^T}{\|S\|} \cdot \left( P - \left( \frac{1}{2} \cdot 1_{n \times 1} + \frac{S}{\|S\|}\cdot \delta \right)\right) = 0
\end{align} i.e. 
\begin{align}
S^T\cdot P = \frac{S^T\cdot 1_{n \times 1}}{2} + \delta \cdot \|S\| \hspace{0.2cm} \Rightarrow \hspace{0.2cm} \left| \frac{S^T\cdot P}{\frac{S^T\cdot 1_{n\times 1}}{2}} - 1 \right| \leq 2\cdot \delta^{\star}
\end{align} follows that the algorithm given in the above Theorem is a FPTAS for the SPP.

Work in the literature for the Subset Sum problem can be found in \cite{DP1}, \cite{DP2}, \cite{DP3}, where the authors also use dynamical programming, \cite{DP4} where the authors present an implementation on GPU, as well as in \cite{ML1a} where a short FPTAS is given based on a list representation of the SSP. A comprehensive review of the literature can be found in \cite{KP} where the authors study several knapsack problems (of which the SSP is a particular case). 

An interesting approach to NP-complete problems, as SSP is, are quantum algorithms. In \cite{QC1}, \cite{QC2} the reader can find several approaches to this emerging computation technology.   

In the next section APPLICATION TO SIMULTANEOUS SUBSET SUM PROBLEM we study the problem of deciding the existence of a vertex of the positive hypercube which is simultaneously contained in more than one and less than $n$ slabs. For this we give similar results for thick spherical shells of large radius spheres by approximating them locally with slabs. Define  
\begin{align}
L_0(x) = \sum_{i = 1}^p \left( \|x - C_i\|^2 - R_i^2 \right)^2 \nonumber 
\end{align} and we have the main theorem of this section for $p$ simultaneous subset sum problems. 

\begin{theorem} Assume that $p < n$ is a power of two. If $\left\| \frac{\sum_{i=1}^p C_i}{p} - C \right\| $ is large enough, then for a given $\delta > 0$ there exists an algorithm of quasi-polynomial complexity which finds  $ x \in \{0,1\}^n$ such that $L_0(x) \in \mathcal{O}( \delta)$ if \textit{exactly one} exists. Failing to return a solution therefore means that either no solution exists either there are at least two. 
\end{theorem}

\section{Main Results}

For $m,n \in \mathbb{N}$, let $s_1, ..., s_n \in \mathbb{N}$ with $s_k \leq 2^m$. W.l.o.g. consider the partition problem (a special case of the Subset Sum problem):
\begin{align}\label{E1}
\exists^? x \in \{0,1\}^n \hspace{0.5cm} \text{with} \hspace{0.5cm} S^T\cdot \left(x - \frac{1}{2} \cdot 1_{n\times 1}\right) = 0
\end{align} where $S = \begin{bmatrix} s_1 &... &s_n\end{bmatrix}^T$.

This can be interpreted, as the problem of determining whether the hyperplane $S^T\cdot \left(x - \frac{1}{2} \cdot 1_{n\times 1}\right) = 0$ which passes through the center of the unit hypercube also passes through one of the hypercubes vertices. 
See Figure \ref{Fig1} for a visual reference. 
\begin{figure}
\includegraphics[scale = 0.4]{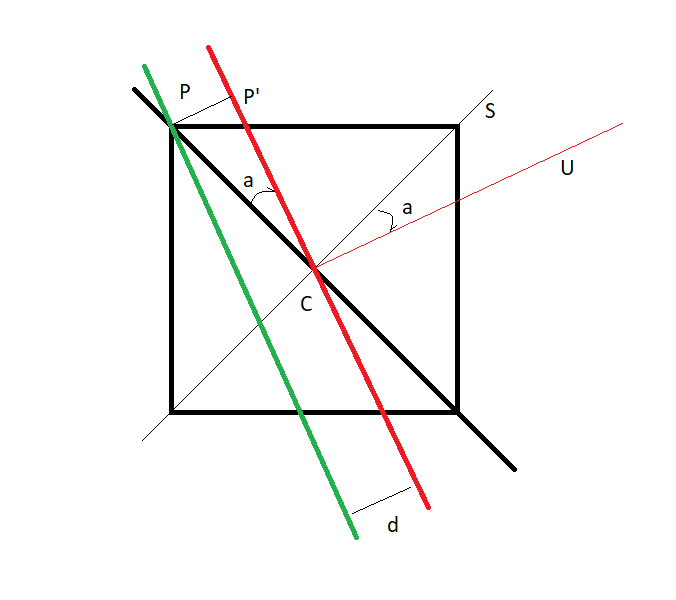}
\caption{Planar representation of approximating one normal vector ($S$ with black) by another vector ($U$, with red). }
\label{Fig1}
\end{figure}

Let $c \in \mathbb{N}$ fixed, $N = n^c$ and $u_k \in \{1, ... , N\}$ defined as below
\begin{align}\label{E2}
\begin{cases}
u_1 = \left\lfloor N \cdot \frac{s_1}{\|S\|} \right\rfloor \\
\vdots \\
u_n = \left\lfloor N \cdot \frac{s_n}{\|S\|} \right\rfloor
\end{cases}
\end{align}  then denote $U = \begin{bmatrix} u_1 &... &u_n\end{bmatrix}^T $. As such $\frac{U}{\|U\|}$ is an approximation of the normal vector $\frac{S}{\|S\|}$  with the advantage that $U$ has elements written on at most $\log_2(N) = c \cdot \log_2(n)$ bits. 
\begin{lemma} \label{L1}

One has the following:
\begin{align}
\lim_{N \to \infty} \frac{N}{\|U\|} = 1
\end{align}
\end{lemma}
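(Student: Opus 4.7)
The plan is to expand $\|U\|^2$ using the floor-function bound and exploit the fact that $S/\|S\|$ is already a unit vector, so the leading term will be exactly $N^2$.

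First, I would write the floor relation as $u_k = N\cdot s_k/\|S\| - \epsilon_k$ with $\epsilon_k \in [0,1)$ for each $k \in \{1,\ldots,n\}$. Squaring and summing gives
\begin{align}
\|U\|^2 = \sum_{k=1}^n u_k^2 = \frac{N^2}{\|S\|^2}\sum_{k=1}^n s_k^2 \;-\; \frac{2N}{\|S\|}\sum_{k=1}^n \epsilon_k s_k \;+\; \sum_{k=1}^n \epsilon_k^2.
\end{align}
The crucial observation is that $\sum_{k=1}^n s_k^2 = \|S\|^2$, so the leading term collapses to exactly $N^2$, yielding
\begin{align}
\frac{\|U\|^2}{N^2} = 1 - \frac{2}{N\|S\|}\sum_{k=1}^n \epsilon_k s_k + \frac{1}{N^2}\sum_{k=1}^n \epsilon_k^2.
\end{align}

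Next, I would bound the two error terms. Since $0 \leq \epsilon_k < 1$, Cauchy--Schwarz gives $\sum_k \epsilon_k s_k \leq \sqrt{n}\cdot \|S\|$, hence the middle term is bounded in absolute value by $2\sqrt{n}/N$. The last term is bounded by $n/N^2$. With $N = n^c$ for some fixed $c \geq 2$, both of these vanish as $N \to \infty$ (equivalently as $n \to \infty$ with $c$ fixed, or simply letting $N$ grow while keeping $S$ fixed, as the lemma reads). Consequently $\|U\|^2/N^2 \to 1$, and taking square roots and reciprocals yields $N/\|U\| \to 1$.

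I do not expect any real obstacle: the only mildly subtle step is recognising that the linear-in-$\epsilon_k$ remainder must be controlled by something better than $\sum_k s_k$ (which could be as large as $n\cdot \max_k s_k$ and dominate $\|S\|$ in pathological cases); Cauchy--Schwarz handles this cleanly by tying the bound to $\|S\|$ itself so that the $\|S\|$ in the denominator cancels. The rest is routine algebra, and no appeal to any earlier result in the paper is needed.
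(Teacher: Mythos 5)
Your proof is correct, and it reaches the conclusion by a more direct route than the paper. Both arguments start from the same floor bound $0 \le N\,s_k/\|S\| - u_k \le 1$, but they diverge immediately afterwards: the paper sums the squared deviations to get $\bigl\| S/\|S\| - U/N \bigr\|^2 \le n/N^2$, expands this into a bound on the inner product $S^T U/(\|S\|\,\|U\|)$, and arrives at the scalar inequality $(1-\epsilon)x + 1/x \le 2$ with $x = N/\|U\|$ and $\epsilon = n/N^2$; it then concludes by \emph{assuming} that $\lim_{\epsilon \to 0} x$ exists and solving $x^2 - 2x + 1 = 0$. That existence assumption is an unjustified step in the paper's argument (the inequality does in fact squeeze $x$ into the interval $\bigl[\tfrac{1-\sqrt{\epsilon}}{1-\epsilon}, \tfrac{1+\sqrt{\epsilon}}{1-\epsilon}\bigr]$, but the paper never says so). Your expansion $\|U\|^2 = N^2 - \tfrac{2N}{\|S\|}\sum_k \epsilon_k s_k + \sum_k \epsilon_k^2$ sidesteps this entirely: it isolates $\|U\|^2/N^2$ and bounds the two remainders explicitly by $2\sqrt{n}/N$ and $n/N^2$, so no appeal to the existence of a limit is needed. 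One minor remark: the Cauchy--Schwarz step is not actually averting a pathology, since $\epsilon_k < 1$ already gives $\sum_k \epsilon_k s_k \le \sum_k s_k \le \sqrt{n}\,\|S\|$ (the latter being Cauchy--Schwarz applied to $\sum_k 1 \cdot s_k$), which is the identical bound; but your estimate is valid either way, and the overall argument is sound.
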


\begin{proof}
For each $k \in \{1, \hdots, n\}$ it is obtained
\begin{align}
0 \leq N \cdot \frac{s_k}{\|S\|}  - u_k\leq 1
\end{align} hence

\begin{align}
&0 \leq \frac{s_k}{\|S\|} - \frac{u_k}{N} \leq \frac{1}{N} \nonumber \\
&0 \leq \sum_{k=1}^n \left( \frac{s_k}{\|S\|} - \frac{u_k}{N} \right)^2 \leq \frac{n}{N^2}
\end{align} therefore 
\begin{align}
\left\| \frac{S}{\|S\|} - \frac{U}{N}\right\|^2 &= \left( \frac{S}{\|S\|} - \frac{U}{N} \right)^T \cdot \left( \frac{S}{\|S\|} - \frac{U}{N} \right) \nonumber \\
& = \frac{\|U\|^2}{N^2} + 1 - 2 \cdot \frac{S^T\cdot U}{\|S\|\cdot N} \leq \frac{n}{N^2}.
\end{align} Multiplying both sides with $\frac{N}{\|U\|}$ one gets
\begin{align}\label{E7}
\left(1 - \frac{n}{N^2}\right) \cdot  \frac{N}{\|U\|} + \frac{\|U\|}{N} \leq 2 \cdot \frac{S^T\cdot U}{\|S\|\cdot \|U\|}  \leq 2
\end{align} Letting $x = \frac{N}{\|U\|}$ in (\ref{E7}) one has for any $\epsilon = \frac{n}{N^2}> 0$
\begin{align}
0 \leq (1-\epsilon)\cdot x + \frac{1}{x} \leq 2
\end{align} therefore assuming that exists $\lim_{\epsilon \to 0} x = x_{\infty}$, this meets 
\begin{align}
\frac{x_{\infty}^2-2\cdot x_{\infty} + 1}{x_{\infty}} = 0 \hspace{0.5cm} \Rightarrow \hspace{0.5cm} x_{\infty} = 1.
\end{align} 
\end{proof}

We give a main result below:
\begin{theorem} \label{T1}
For $\frac{n}{N^2}$ small enough, if the partition problem (\ref{E1}) has at least a solution, then the following problem also has at least one

\begin{align}\label{E10}
\exists^?x\in \{0,1\}^n \hspace{0.5cm} \text{with} \hspace{0.5cm} U^T\cdot x = \frac{U^T\cdot 1_{n\times 1}}{2} + \delta 
\end{align} for $\delta \in \mathbb{Z}$ with $-\frac{n}{2} \leq \delta \leq \frac{n}{2}$. 

\end{theorem}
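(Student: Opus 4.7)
The plan is to take any solution $x^\star$ of (\ref{E1}), i.e.\ a vector with $S^T(x^\star - \tfrac{1}{2}\cdot 1_{n\times 1}) = 0$, and show that the \emph{same} $x^\star$ satisfies (\ref{E10}) for some $\delta$ in the stated range. The argument is essentially a direct substitution driven by the floor-rounding that defines $U$; Lemma~\ref{L1} is not actually invoked at this step.

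First, I would write each coordinate of $U$ as $u_k = \frac{N\,s_k}{\|S\|} - \theta_k$, where $\theta_k \in [0,1)$ is the fractional part discarded by the floor in (\ref{E2}). Using this decomposition, I expand
\begin{align}
U^T\!\left(x^\star - \tfrac{1}{2}\, 1_{n\times 1}\right) = \frac{N}{\|S\|}\cdot S^T\!\left(x^\star - \tfrac{1}{2}\, 1_{n\times 1}\right) - \sum_{k=1}^n \theta_k\!\left(x^\star_k - \tfrac{1}{2}\right). \nonumber
\end{align}
The hypothesis kills the first term on the right, so the whole expression equals the error sum.

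Second, I would bound this error sum. Since $x^\star_k - \tfrac{1}{2} \in \{-\tfrac{1}{2}, \tfrac{1}{2}\}$ and $\theta_k \in [0,1)$, every summand lies in $(-\tfrac{1}{2}, \tfrac{1}{2}]$, giving total magnitude at most $n/2$. Defining $\delta := U^T x^\star - \tfrac{1}{2}\, U^T 1_{n\times 1}$ therefore yields directly $-n/2 \le \delta \le n/2$, and since $U \in \mathbb{N}^n$ and $x^\star \in \{0,1\}^n$ the quantity $U^T x^\star$ is an integer, so $\delta$ is integral (up to the parity of $U^T 1_{n\times 1}$, which is a non-issue algebraically and at worst shifts $\delta$ into $\tfrac{1}{2}\mathbb{Z}$).

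The main subtlety I anticipate is that the hypothesis ``$n/N^2$ small enough'' does not appear to play any role in the estimate above: the bound $|\delta|\le n/2$ is unconditional. This hypothesis is presumably needed only downstream, where via Lemma~\ref{L1} one wants $\|U\|$ close to $N$ so that a width-$\delta$ perturbation on the $U$-side translates into a small slab on the $S$-side; the present algebraic statement holds without it. Consequently I see no hard obstacle—the proof is a one-line expansion followed by a triangle-type bound—so the chief task is merely to present the decomposition of $u_k$ cleanly and to track the integrality of $\delta$.
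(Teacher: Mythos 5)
Your proof is correct, and it takes a genuinely different route from the paper's. The paper argues geometrically: it takes the solution $P$, projects it onto the central hyperplane with normal $U$, bounds the projection distance by $d^{\star} = \frac{\sqrt{n}}{2}\sqrt{1-c_a^2}$ where $c_a$ is the cosine of the angle between $S$ and $U$, and then shows $d^{\star}\cdot\|U\| \lesssim \frac{n}{2}$ by invoking Lemma~\ref{L1} and several asymptotic approximations ($\frac{N}{\|U\|}\sim 1$, etc.) that genuinely require $\frac{n}{N^2}$ to be small. Your decomposition $u_k = \frac{N s_k}{\|S\|} - \theta_k$ with $\theta_k \in [0,1)$ short-circuits all of this: the cross term vanishes by hypothesis and the error sum $\sum_k \theta_k(x^\star_k - \tfrac{1}{2})$ is bounded by $\frac{n}{2}$ term by term, exactly and unconditionally. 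This buys you a rigorous proof free of the ``$\approx$'' steps in (\ref{E14}) and free of the hypothesis on $\frac{n}{N^2}$ (which, as you correctly observe, is only needed downstream to translate a $U$-side perturbation into a thin $S$-side slab). What the paper's heavier machinery buys in exchange is the quantity $d^{\star}$ itself, which is reused in Theorems~\ref{T2} and~\ref{T3} for the converse direction and the slab widths, so the geometric setup is not wasted even though it is overkill here. One small caveat applies equally to both proofs: $\delta = U^T x^\star - \frac{U^T\cdot 1_{n\times 1}}{2}$ lies in $\frac{1}{2}\mathbb{Z}$, not necessarily in $\mathbb{Z}$, when $U^T\cdot 1_{n\times 1}$ is odd; this is a defect of the theorem statement rather than of your argument, and you flag it appropriately.
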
 

The above theorem can be used as follows: if there is no solution to (\ref{E10}) then (\ref{E1}) does not have a solution either. 
\begin{proof}
For a solution $P$ of the original partition problem, (i.e. a vertex of the unit hypercube on the hyperplane $S^T\cdot \left( x - \frac{1}{2} \cdot 1_{n \times 1}\right) = 0$ ) let $P'$ be the projection on the hyperplane $U^T\cdot \left( x - \frac{1}{2} \cdot 1_{n \times 1}\right) = 0$ (with red in Figure \ref{Fig1}), also a central hyperplane passing through the cube center, but having $U$ as the normal vector. Define $c_a, d^{\star}$ as follows
\begin{align}\label{E11}
c_a = \frac{\sum_{k=1}^n u_k\cdot s_k}{\|U\| \cdot \|S\|} = \cos(a) \hspace{0.5cm} d^{\star} = \frac{\sqrt{n}}{2} \cdot \sqrt{1 - c_a^2}
\end{align} where $a$ is the angle between $U, S$.

 It follows that $P$ (the original solution) should belong to a translation (with green in Figure (\ref{Fig1}), either forward or backwards of the hyperplane $U^T\cdot \left( x - \frac{1}{2} \cdot 1_{n \times 1}\right) = 0$ ) along its normal direction, with some distance $d \leq d^{\star}$. Therefore for each solution $P$ of the original problem, it must also be the solution of the following problem:
\begin{align}\label{E12}
\exists^?x\in \{0,1\}^n \hspace{0.5cm} \text{with} \hspace{0.5cm} U^T\cdot \left( x - \left( \frac{1}{2}\cdot 1_{n \times 1} + \frac{U}{\|U\|} \cdot d \right)\right) = 0
\end{align} for $-d^{\star} \leq d \leq d^{\star}$

 The later problem is equivalent to 
\begin{align}\label{E13}
\exists^?x\in \{0,1\}^n \hspace{0.5cm} \text{with} \hspace{0.5cm} U^T\cdot x = T 
\end{align} where $T = \frac{\sum_{k=1}^n u_k}{2} \pm d \cdot \|U\| $. All is left to prove is that $d^{\star} \cdot \|U\| \leq \frac{n}{2}$. Indeed, from (\ref{E11}) and Lemma \ref{L1} one has that 
\begin{align}\label{E14}
d^{\star} \cdot \|U\| &= \frac{\sqrt{n}}{2} \cdot \sqrt{1 - c_a^2} = \frac{\sqrt{n} \cdot \|U\|}{2} \cdot \sqrt{ 1 - \left( \frac{\left( 1 - \frac{n}{N^2}\right) \frac{N}{\|U\|} + \frac{\|U\|}{N}}{2}\right)^2} \nonumber \\
& \approx^{\frac{N}{\|U\|} \sim 1} \frac{\sqrt{n}\cdot \|U\|}{2} \cdot \sqrt{1 - \left( 1 - \frac{n}{2\cdot N^2}\right)^2} \nonumber \\
& = \frac{\sqrt{n}\cdot \|U\|}{2} \cdot \sqrt{\frac{n}{2\cdot N^2} \cdot \left( 2 -  \frac{n}{2\cdot N^2} \right)} \nonumber \\
& \leq \frac{\sqrt{n} \cdot \|U\|}{2} \cdot \frac{\sqrt{n}}{N} \approx \frac{n}{2} 
\end{align}
\end{proof}

In general, the converse of Theorem \ref{T1} need not be true, i.e. if the problem (\ref{E12}, \ref{E13}) has a solution, it does not necessarily mean that the original partition problem (\ref{E1}) also has one. We shall show, that anyway, a solution to (\ref{E12}, \ref{E13}) is an approximate solution to (\ref{E1}). 

\begin{theorem}\label{T2} Let $Q \in \{0,1\}^n$ be a solution to (\ref{E12}, \ref{E13}), then $Q$ is an approximate solution to (\ref{E1}) such that
\begin{align}
\left| \frac{S^T \cdot Q^T}{\frac{S^T\cdot 1_{n \times 1}}{2}} - 1 \right| \leq 2\cdot \frac{n}{N}
\end{align} where recall $N = n^c$ for a fixed chosen $c \in \mathbb{N}$. 
\end{theorem}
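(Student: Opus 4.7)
The plan is to translate the hyperplane condition satisfied by $Q$ (with normal $U$) into an approximate hyperplane condition for the original normal $S$, exploiting the fact that $U/N$ is close to $S/\|S\|$ in the sense quantified in the proof of Lemma \ref{L1}. Concretely, set $v := Q - \tfrac{1}{2}\cdot 1_{n\times 1}$. Two key facts about $v$ are immediate: first, since each component is $\pm \tfrac{1}{2}$, one has $\|v\| = \tfrac{\sqrt{n}}{2}$; second, because $Q$ solves (\ref{E12})--(\ref{E13}), there exists $d$ with $|d| \le d^\star$ such that $U^T v = d\cdot \|U\|$.

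Next I would decompose
\begin{align*}
S^T v \;=\; \|S\|\cdot\frac{S^T}{\|S\|}v \;=\; \|S\|\left(\frac{U^T v}{N} + \Bigl(\frac{S}{\|S\|} - \frac{U}{N}\Bigr)^T v\right).
\end{align*}
The first term is bounded by $\|S\|\cdot \tfrac{d^\star \|U\|}{N} \le \|S\|\cdot \tfrac{n}{2N}$, invoking the estimate $d^\star \cdot \|U\| \le \tfrac{n}{2}$ already proved in (\ref{E14}). For the second term, Cauchy--Schwarz combined with the bound $\bigl\|\tfrac{S}{\|S\|} - \tfrac{U}{N}\bigr\| \le \tfrac{\sqrt n}{N}$ (derived inside the proof of Lemma \ref{L1}) and $\|v\| = \tfrac{\sqrt n}{2}$ gives an upper bound of $\|S\|\cdot \tfrac{n}{2N}$ as well. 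Summing, $|S^T v| \le \|S\|\cdot \tfrac{n}{N}$.

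Finally, to obtain the stated relative bound I would pass to the ratio
\begin{align*}
\left|\frac{S^T Q}{\tfrac{S^T\cdot 1_{n\times 1}}{2}} - 1\right| = \frac{|S^T v|}{\tfrac{S^T\cdot 1_{n\times 1}}{2}} \le \frac{\|S\|}{\tfrac{S^T\cdot 1_{n\times 1}}{2}}\cdot \frac{n}{N},
\end{align*}
and observe that since every $s_k \ge 0$, one has $\|S\| = \sqrt{\sum s_k^2} \le \sum s_k = S^T\cdot 1_{n\times 1}$, so the coefficient is at most $2$, yielding the claimed $2\cdot \tfrac{n}{N}$.

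There is no real obstacle here; the argument is essentially a telescoping use of Cauchy--Schwarz plus the two a priori estimates (the approximation bound from Lemma \ref{L1} and the thickness bound (\ref{E14})). The only mildly delicate point is making sure the constants match up: specifically that the two contributions from the decomposition of $S$ both turn out bounded by $\|S\|\cdot \tfrac{n}{2N}$ so that their sum produces exactly the factor $2$ in the final estimate, and that the elementary inequality $\|S\| \le S^T\cdot 1_{n\times 1}$ (valid precisely because $S$ has nonnegative entries) is what absorbs the denominator.
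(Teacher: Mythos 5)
Your proof is correct, and it takes a genuinely different route from the paper's. The paper argues geometrically: it projects $Q$ onto the hyperplane $S^T(x-C)=0$ through the translated center $C=\frac{1}{2}\cdot 1_{n\times 1}+\frac{U}{\|U\|}\cdot d$, bounds the projection distance by $\frac{\sqrt{n}}{2}\sin(a)\le d^{\star}$, and accumulates the two offsets to get $\bigl|S^TQ-\frac{S^T\cdot 1_{n\times 1}}{2}\bigr|\le (d+\delta)\|S\|\le 2d^{\star}\|S\|$ before dividing through. You instead work purely algebraically with $v=Q-\frac{1}{2}\cdot 1_{n\times 1}$, splitting $\frac{S^Tv}{\|S\|}$ into $\frac{U^Tv}{N}$ plus the error term $\bigl(\frac{S}{\|S\|}-\frac{U}{N}\bigr)^Tv$, and controlling the latter by Cauchy--Schwarz together with the bound $\bigl\|\frac{S}{\|S\|}-\frac{U}{N}\bigr\|\le\frac{\sqrt{n}}{N}$ from the proof of Lemma \ref{L1}. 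The two arguments are morally equivalent (the $\sin(a)$ projection bound is exactly what Cauchy--Schwarz yields for the component of $v$ orthogonal to $U$), but yours has two concrete advantages: it avoids the intermediate approximation $\frac{N}{\|U\|}\sim 1$ that the paper needs to convert $4d^{\star}$ into $\frac{2n}{N}$, since you keep the ratio $\frac{\|U\|}{N}$ explicit and apply (\ref{E14}) in the form $d^{\star}\|U\|\le\frac{n}{2}$ directly; and it makes explicit the inequality $\|S\|\le S^T\cdot 1_{n\times 1}$ (valid because the entries are nonnegative), which the paper uses silently in its last display. The only dependence you retain on the paper's earlier material is the estimate (\ref{E14}) itself, which is established there only up to the same asymptotic approximation, so your argument inherits that caveat but adds no new ones.
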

\begin{proof}
Let $Q$ be a solution to 
\begin{align}
U^T\cdot \left( x - \left( \frac{1}{2}\cdot 1_{n \times 1} + \frac{U}{\|U\|} \cdot d \right)\right) = 0
\end{align} then $Q \in \{0,1\}^n$ belongs to a hyperplane with normal vector $U$ and passing through the point $C = \frac{1}{2}\cdot 1_{n \times 1} + \frac{U}{\|U\|} \cdot d $. Let $Q'$ be its projection on the hyperplane $S^T \cdot \left( x - C \right) = 0$. Then since $\|Q - C\| \leq \frac{\sqrt{n}}{2}$ follows that $\|Q - Q'\| \leq \frac{\sqrt{n}}{2} \cdot \sin(a) = \delta^{\star}$. As such, it must exist $\delta \in [-d^{\star}, d^{\star}]$ such that 
\begin{align}
S^T \cdot \left( Q - \left( C + \delta \cdot \frac{S}{\|S\|}\right)\right) = 0
\end{align} that is
\begin{align}\label{E18}
& S^T \cdot Q = \frac{S^T \cdot 1_{n \times 1}}{2} + \frac{S^T\cdot U}{\|U\|} \cdot d + \delta \cdot \|S\| \nonumber \\
& \left| S^T \cdot Q- \frac{S^T \cdot 1_{n \times 1}}{2} \right| \leq  (d+ \delta) \cdot \|S\|
\end{align} Finally 
\begin{align}
\left|\frac{\frac{S^T \cdot 1_{n \times 1}}{2} + (d + \delta ) \cdot \|S\|}{\frac{S^T\cdot 1_{n \times 1}}{2}} -1 \right|&= 2 \cdot \frac{\|S\|}{S^T \cdot 1_{n \times 1}} \cdot (d + \delta) \nonumber \\
& \leq 4 \cdot d^{\star} \leq^{\text{see (\ref{E14})}} \frac{2\cdot n}{N}  
\end{align}
\end{proof}

Based on the above results, we give our main theorem:
\begin{theorem}\label{T3}
For $d^{\star}$ given by (\ref{E11}), there exists an algorithm of complexity $\mathcal{O}\left( N\cdot n^{\frac{5}{2}}\right)$ which upon completion fulfils exactly one of the following alternatives:
\begin{enumerate}
\item proves that the slab $\mathcal{S}\left( S, \frac{1}{2}\cdot 1_{n \times 1},2\cdot d^{\star}   \approx  \frac{n}{N} \right)$ contains no vertex of the hypercube $Q_n = [0,1]^n$. 
\item finds a vertex of the hypercube  $Q_n = [0,1]^n$ in the slab \\$\mathcal{S}\left( S, \frac{1}{2}\cdot 1_{n \times 1}, 8\cdot d^{\star}   \approx  \frac{4\cdot n}{N} \right)$. 
\end{enumerate}
\end{theorem}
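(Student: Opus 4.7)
The plan is to combine the structural Theorems \ref{T1} and \ref{T2} with a standard Bellman-style dynamic program applied to the surrogate vector $U$, whose coordinates live in $\{1,\ldots,N\}$ with $N=n^{c}$.

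\textbf{Step 1 (reduction).} First I would extend the projection argument used in the proof of Theorem \ref{T1} to cover vertices lying inside the slab $\mathcal{S}\left(S,\tfrac{1}{2}\cdot 1_{n\times 1},2 d^{\star}\right)$ rather than on the central $S$-hyperplane. If $P$ is such a vertex, write $P=P^{S}+(P-P^{S})$, where $P^{S}$ is its orthogonal projection onto that hyperplane. The slab definition yields $\|P-P^{S}\|\leq d^{\star}$, and since $P\in[0,1]^{n}$ one has $\|P^{S}-\tfrac{1}{2}\cdot 1_{n\times 1}\|\leq \tfrac{\sqrt{n}}{2}$. Applying the Theorem \ref{T1} argument to the on-hyperplane point $P^{S}$ bounds its distance to the $U$-hyperplane by $d^{\star}$, so the triangle inequality gives
\[
\left|\frac{U^{T}}{\|U\|}\bigl(P-\tfrac{1}{2}\cdot 1_{n\times 1}\bigr)\right|\leq 2 d^{\star}.
\]
By (\ref{E14}), $2 d^{\star}\|U\|\leq n$, and since $U^{T}P\in\mathbb{Z}$ every slab-vertex $P$ satisfies $U^{T}P=\tfrac{U^{T}\cdot 1_{n\times 1}}{2}+\delta$ for some integer $\delta$ with $|\delta|\leq n$.

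\textbf{Step 2 (dynamic programming).} I would then run the classical Bellman DP on $U$ to determine all achievable subset sums, at cost $\mathcal{O}(n\cdot U^{T}\cdot 1_{n\times 1})=\mathcal{O}(n^{2} N)$, which fits inside the claimed $\mathcal{O}(Nn^{5/2})$ budget. Scanning the $\mathcal{O}(n)$ integer targets singled out in Step 1 then costs only $\mathcal{O}(n)$ more. If none of those targets is achievable, Alternative (1) holds by the contrapositive of Step 1; otherwise, standard DP traceback returns a witness vertex $Q\in\{0,1\}^{n}$.

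\textbf{Step 3 (quality of the witness).} Applying Theorem \ref{T2} to $Q$, the estimate (\ref{E18}) combined with $|d|,|\delta|\leq d^{\star}$ yields $|S^{T}Q-\tfrac{S^{T}\cdot 1_{n\times 1}}{2}|\leq 2 d^{\star}\|S\|$, so $Q$ lies inside $\mathcal{S}(S,\tfrac{1}{2}\cdot 1_{n\times 1},4 d^{\star})\subset\mathcal{S}(S,\tfrac{1}{2}\cdot 1_{n\times 1},8 d^{\star})$, establishing Alternative (2).

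\textbf{Main obstacle.} The two alternatives are mutually exclusive by construction, and both the DP of Step 2 and the bound of Step 3 are either standard or a direct specialisation of Theorem \ref{T2}. The conceptually delicate point is therefore the slab version of Theorem \ref{T1} in Step 1: the projection argument must be quantitatively tight enough that the $\mathcal{O}(\sqrt{n})\cdot\sin(a)$ slack picked up when projecting a slab-vertex onto the $S$-hyperplane still fits inside the original $d^{\star}$ budget, so that the window of integer targets inspected by the DP stays of width $\mathcal{O}(n)$ and the claimed complexity is preserved.
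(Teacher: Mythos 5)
Your proposal follows essentially the same route as the paper: project slab vertices onto the $U$-hyperplane to show every candidate satisfies $U^{T}P=\frac{U^{T}\cdot 1_{n\times 1}}{2}+t$ with $|t|\leq n$, run the Bellman DP on $U$ over that $\mathcal{O}(n)$-window of integer targets, and push any witness back through the Theorem~\ref{T2} estimate; your Step~1 decomposition (project onto the central $S$-hyperplane first, then onto the central $U$-hyperplane) is an equivalent triangle-inequality argument to the paper's (translate along the $S$-normal, then project). One quantitative slip in Step~3: a DP witness $Q$ only satisfies $|U^{T}Q-\frac{U^{T}\cdot 1_{n\times 1}}{2}|\leq n\approx 2d^{\star}\|U\|$, so the relevant offset is $|d|\leq 2d^{\star}$ rather than $d^{\star}$, which places $Q$ in the $6d^{\star}$-thick (paper: $8d^{\star}$-thick) slab rather than the $4d^{\star}$ one you state --- still within the theorem's claim, so the proof stands.
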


\begin{proof}
The proposed algorithm is solving the following problem 
\begin{align}\label{E21a}
\exists^?x\in \{0,1\}^n \hspace{0.5cm} \text{with} \hspace{0.5cm} U^T\cdot x = \frac{U^T\cdot 1_{n\times 1}}{2} + t \in \mathbb{N}
\end{align} for $t \in \mathbb{Z}$ with $-n \leq t \leq n$ using Dynamical Programming. The stated complexity shall be proven later. 

 First assume that there is a point in the slab $\mathcal{S}\left( S, \frac{1}{2}\cdot 1_{n \times 1},2\cdot d^{\star}   \approx  \frac{n}{N} \right)$ and we show that it can be found as the solution to one of the problems in (\ref{E21a}).

 Let $P \in \mathcal{S}\left(S, \frac{1}{2} \cdot 1_{n \times 1}, 2\cdot d^{\star}\right)$ then 
\begin{align}
-d^{\star} \leq \frac{S^T}{\|S\|} \cdot \left(P - \frac{1}{2} \cdot 1_{n \times 1} \right) \leq d^{\star} 
\end{align}  therefore it exists $d \in \left[-d^{\star}, d^{\star} \right]$ such that 
\begin{align}
\frac{S^T}{\|S\|} \cdot \left(P - \left( \frac{1}{2} \cdot 1_{n \times 1} + \frac{S}{\|S\|} \cdot d \right) \right) = 0
\end{align} Let $U$ be given by (\ref{E2}) and $P'$ be the projection of $P$ on the hyperplane $\frac{U^T}{\|U\|} \cdot \left(x -\left( \frac{1}{2} \cdot 1_{n \times 1} + \frac{S}{\|S\|}\cdot d\right) \right) = 0$. Denote
\begin{align}
C = \frac{1}{2} \cdot 1_{n \times 1} + \frac{S}{\|S\|}\cdot \delta
\end{align} then $\|C - P\| \leq \frac{\sqrt{n}}{2}$ and $\|P - P'\| \leq \sin(\angle(U,S)) \cdot \frac{\sqrt{n}}{2} = d^{\star}$ from (\ref{E11}). 

It follows that $P$ belongs to a parallel translation of the hyperplane $\frac{U^T}{\|U\|} \cdot \left(x -\left( \frac{1}{2} \cdot 1_{n \times 1} + \frac{S}{\|S\|}\cdot d\right) \right) = 0$, hence it exists $\delta \in [-d^{\star}, d^{\star}]$ such that 
\begin{align}
\frac{U^T}{\|U\|} \cdot \left(P -\left( \frac{1}{2} \cdot 1_{n \times 1} + \frac{S}{\|S\|}\cdot d + \frac{U}{\|U\|} \cdot \delta \right) \right) = 0
\end{align} that is
\begin{align}
U^T\cdot P = \frac{U^T\cdot 1_{n\times 1}}{2} + \frac{U^T\cdot S}{\|S\|} \cdot d + \|U\| \cdot \delta.  
\end{align} Because
\begin{align}
\left| \frac{U^T\cdot S}{\|S\|} \cdot d + \|U\| \cdot \delta \right| \leq \left( d + \delta \right) \cdot \|U\| \leq 2 \cdot d^{\star} \cdot \|U\| \approx^{(\ref{E14})} n 
\end{align} the conclusion follows.

Next, for the second alternative, we show that if the problem

\begin{align}\label{E28}
\exists^?x\in \{0,1\}^n \hspace{0.5cm} \text{with} \hspace{0.5cm} U^T\cdot x = \frac{U^T\cdot 1_{n\times 1}}{2} + t \in \mathbb{N}
\end{align} for $t \in \mathbb{Z}$ with $-n \leq t \leq n$ has a solution $Q$ then from Theorem \ref{T2} one can readily see that 
 $Q \in \mathcal{S}\left(S, \frac{1}{2}\cdot 1_{n \times 1}, 4 \cdot d^{\star} \approx \frac{n}{N} \right)$. Indeed, from (\ref{E18}) 
\begin{align}
& S^T \cdot Q = \frac{S^T \cdot 1_{n \times 1}}{2} + d \cdot \|S\| + \delta \cdot \frac{S^T\cdot U}{\|U\|} \nonumber \\
& \left| S^T\cdot Q - \frac{S^T \cdot 1_{n \times 1}}{2} \right| \leq (\delta + d) \cdot \|S\| \leq 4 \cdot d^{\star} \cdot \|S\|
\end{align} that is
\begin{align}
\frac{S}{\|S\|} \cdot \left( Q -  \frac{1}{2}\cdot 1_{n \times 1} \right) = \rho \in[-n, n] = \left[-4\cdot d^{\star}, 4\cdot d^{\star}\right]
\end{align} hence, as stated $Q \in \mathcal{S}\left(S, \frac{1}{2}\cdot 1_{n \times 1}, 8 \cdot d^{\star} \approx \frac{4\cdot n}{N} \right) $. 

\end{proof}
\section{Complexity analysis}

The proof of Theorem \ref{T3} is completed by the following remark regarding the complexity of solving the problems (\ref{E12}, \ref{E13})
\begin{remark}
Solving any instance of the subset-sum in the following form:
\begin{align}
U^T\cdot x = \frac{U^T\cdot 1_{n \times 1}}{2} + t
\end{align} with $t \in [-n, n]$ can be done in $\mathcal{O}\left( n \cdot T+t\right) $ using dynamical programming. Since $T + t \leq \frac{U^T \cdot 1_{n \times 1}}{2} + n \leq \frac{\|U\| \cdot \sqrt{n}}{2} + n \in \mathcal{O}\left( N \cdot \sqrt{n} \right)$ follows that solving one instance (of the $2\cdot n$ instances) has complexity
\begin{align}
\mathcal{O} \left( N \cdot n^{\frac{3}{2}}\right)
\end{align} hence solving them all requires complexity $\mathcal{O}\left( N \cdot n^{\frac{5}{2}} \right)$
\end{remark} 

\section{Application to Simultaneous Subset-sum Problem}

Let $p \in \mathbb{N}$ and consider $S_i \in \mathbb{N}^n$ and $T_i \in \mathbb{N}$ for each $i \in \{1, \hdots, p\}$.  Consider the Simultaneous Subset Sum Problem (SSSP):
\begin{align}\label{E37}
\exists^? x \in \{0,1\}^n \hspace{0.5cm} \text{with} \hspace{0.5cm} \begin{cases} S_1^T\cdot x = T_1 \\ 
\vdots \\
S_p^T \cdot x = T_p
\end{cases}
\end{align} 

Let $C = \frac{1}{2} \cdot 1_{n \times 1}$ then the closed ball $\bar{\mathcal{B}} \left(C, \frac{\sqrt{n}}{2} \right)$ contains the vertices of $\mathcal{Q}_n = [0,1]^n$.  We assume in the following that $T_i = \frac{S_i^T\cdot 1_{n \times 1}}{2}$, i.e. what is analyzed in the following is can be referred to as the Simultaneous Subset Partition Problem (SSPP).

\subsection{Preliminary Results}
The problem (\ref{E37}) can be relaxed as below:
\begin{align}\label{E38}
\exists^? x \in \{0,1\}^n \hspace{0.5cm} \text{with} \hspace{0.5cm} \begin{cases}  -\frac{\epsilon}{2} \leq \frac{S_1^T}{\|S_1\|}\cdot \left( x  - C \right) \leq \frac{\epsilon}{2} \\ 
\vdots \\
-\frac{\epsilon}{2} \leq \frac{S_p^T}{\|S_p\|}\cdot \left( x  - C \right) \leq \frac{\epsilon}{2}
\end{cases}
\end{align} Let $\rho > 0$ fixed and $R =\sqrt{ \rho^2 + \frac{n}{4}}$ then define:
\begin{align}
C_i = C - \rho \cdot \frac{S_i}{\|S_i\|}
\end{align} and define the problem:

\begin{align}\label{E40}
\exists^? x \in \{0,1\}^n \hspace{0.5cm} \text{with} \hspace{0.5cm} \begin{cases}  -\frac{\delta}{2} \leq \|x - C_1\| - R \leq \frac{\delta}{2} \\ 
\vdots \\
-\frac{\delta}{2} \leq \|x - C_p\| - R \leq \frac{\delta}{2}
\end{cases}
\end{align}

The following lemma justifies the approach that shall be given in the following
\begin{lemma}\label{L4.1}
Let $x$ be a solution to (\ref{E38}) then $x$ is also a solution of (\ref{E40}) with $\delta \leq 2\cdot \left( \frac{\epsilon}{2} + \frac{n}{8\cdot \rho}\right)$. Conversely, if $x$ is a solution to (\ref{E40}) then $x$ is also a solution to (\ref{E38}) with $\epsilon \leq 2 \cdot \left( \frac{\delta}{2} + \frac{n}{8 \cdot \rho}\right)$

\end{lemma}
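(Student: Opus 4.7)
The plan is to exploit the combinatorial constraint $x \in \{0,1\}^n$, which forces $\|x - C\|^2 = n/4$ since each coordinate of $x - C$ equals $\pm \tfrac{1}{2}$. Combined with $R^2 = \rho^2 + \tfrac{n}{4}$ and $C_i = C - \rho\, S_i/\|S_i\|$, a direct expansion gives
\begin{align*}
\|x - C_i\|^2 \;=\; \|x-C\|^2 + 2\rho\, \tfrac{S_i^T}{\|S_i\|}(x-C) + \rho^2 \;=\; R^2 + 2\rho\, a_i,
\end{align*}
where $a_i := \tfrac{S_i^T}{\|S_i\|}(x-C)$ is precisely the slab residual appearing in (\ref{E38}). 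Factoring the difference of squares yields the single master identity
\begin{align*}
(\|x - C_i\| - R)\,(\|x - C_i\| + R) \;=\; 2\rho\, a_i,
\end{align*}
which ties the slab quantity to the spherical-shell quantity and will drive both directions of the equivalence.

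I would then control the factor $\tfrac{2\rho}{\|x-C_i\|+R}$ relating the two residuals via two elementary sandwiches: the reverse triangle inequality applied to $x - C_i = (x-C) + \rho\, S_i/\|S_i\|$ together with $\|x-C\| = \sqrt{n}/2$ gives $|\|x-C_i\|-\rho| \leq \sqrt{n}/2$; and dividing $R^2 - \rho^2 = n/4$ by $R+\rho \geq 2\rho$ yields $0 \leq R - \rho \leq \tfrac{n}{8\rho}$. Together these sandwich $\|x-C_i\|+R$ between roughly $2\rho - \sqrt{n}/2$ and $2\rho + \sqrt{n}/2 + n/(8\rho)$, so the ratio differs from $1$ only by terms of order $\mathcal{O}(\sqrt{n}/\rho)$.

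Assuming $|a_i| \leq \epsilon/2$ (forward direction), the master identity and the lower bound on $\|x-C_i\|+R$ give $|\|x-C_i\|-R| \leq \tfrac{|a_i|}{1 - \sqrt{n}/(4\rho)}$; expanding the geometric factor to first order and absorbing the cross-term using $|a_i| \leq \sqrt{n}/2$ (the cube-diameter bound on a vertex residual) produces the advertised $\epsilon/2 + n/(8\rho)$, i.e.\ $\delta \leq 2\bigl(\epsilon/2 + n/(8\rho)\bigr)$. The converse is symmetric: assuming $|\|x-C_i\|-R| \leq \delta/2$, I invert the identity to write $|a_i| = \tfrac{|\|x-C_i\|-R|\,(\|x-C_i\|+R)}{2\rho}$ and plug in the upper bound on $\|x-C_i\|+R$ to obtain $|a_i| \leq \delta/2 + n/(8\rho)$, which rearranges to the claimed bound on $\epsilon$.

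The hard part is purely bookkeeping. A naive estimate of the relating factor easily produces a cross-term of size $n/(4\rho)$ rather than the sharper $n/(8\rho)$ appearing in the lemma. To recover the stated constant I would have to account for the two distinct contributions---the $\sqrt{n}/2$ deviation in $|\|x-C_i\|-\rho|$ and the $\tfrac{n}{8\rho}$ deviation in $R-\rho$---separately rather than bundling them, and rely on either $|a_i| \leq \sqrt{n}/2$ or an implicit smallness assumption on $\epsilon$ (respectively $\delta$) to absorb the mixed term. Once this accounting is arranged, everything else reduces to elementary algebra downstream of the master identity above.
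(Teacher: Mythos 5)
Your proof is correct in substance but proceeds by a genuinely different mechanism than the paper's. The paper argues geometrically: it projects $x$ onto the hyperplane $S_i^T(y-C)=0$, observes that any such projection inside $\mathcal{B}(C,\sqrt{n}/2)$ lies within the sagitta $\gamma = R-\rho = \frac{n/4}{\sqrt{\rho^2+n/4}+\rho}\leq \frac{n}{8\rho}$ of the sphere $\partial\mathcal{B}(C_i,R)$, and adds the slab half-thickness; the whole thing is read off a two-dimensional figure. You instead exploit that a vertex satisfies $\|x-C\|^2 = n/4$ exactly, which yields the exact identity $\|x-C_i\|^2 - R^2 = 2\rho\,a_i$ with $a_i = \frac{S_i^T}{\|S_i\|}(x-C)$; this makes the correspondence between the two residuals completely explicit and reduces both directions to bounding the conversion factor $\frac{2\rho}{\|x-C_i\|+R}$. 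That identity is arguably cleaner than the paper's figure-based argument (which also contains a typo, citing (\ref{E40}) where (\ref{E38}) is meant). One refinement to your bookkeeping: rather than expanding $\bigl(1-\sqrt{n}/(4\rho)\bigr)^{-1}$ to first order, note that the identity plus $a_i^2\leq n/4$ gives $\|x-C_i\|^2 \geq (\rho+a_i)^2$, hence $\|x-C_i\|\geq \rho-|a_i|$ and $R-\|x-C_i\| \leq (R-\rho)+|a_i| \leq \frac{n}{8\rho}+\frac{\epsilon}{2}$ exactly, with the upper side handled by $\|x-C_i\|+R\geq 2R\geq 2\rho$ when $a_i\geq 0$; this lands on the stated constant with no asymptotics. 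For the converse your route does pick up an extra factor $\frac{\|x-C_i\|+R}{2\rho}\leq 1+\frac{\sqrt{n}}{4\rho}+\frac{n}{16\rho^2}$, so the stated $\frac{n}{8\rho}$ is only recovered under a mild smallness condition such as $\delta\lesssim\sqrt{n}$ — but the paper's own proof of that direction is no more careful, so this is not a defect relative to what you were asked to reprove.
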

\begin{proof}
One can see in Figure \ref{Fig2} that

\begin{figure}
\includegraphics[scale = 0.6]{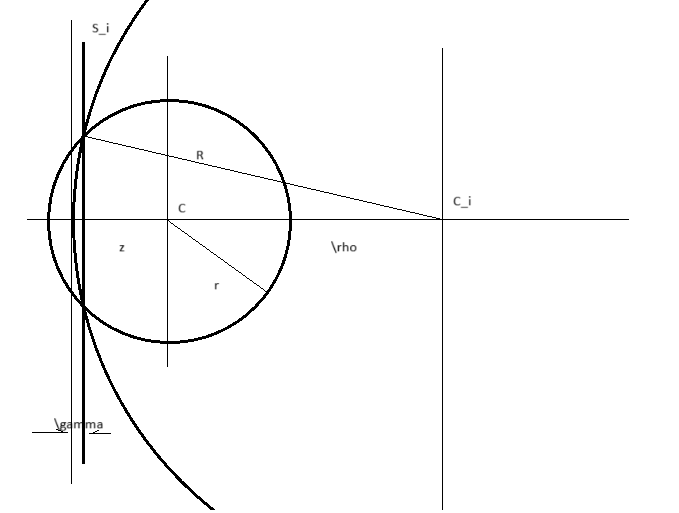}
\caption{Planar representation of local plane approximation by sphere}
\label{Fig2}
\end{figure}

\begin{align}
\gamma = R - \left( z + \rho \right) 
\end{align} In the lemma hypothesis case, $z = 0$ since the hyperplanes pass through the cube center $C$ it is obtained
\begin{align}\label{E42}
\gamma = \sqrt{ \rho^2 + \frac{n}{4} } - \rho = \frac{\frac{n}{4}}{ \sqrt{\rho^2 + \frac{n}{4}} + \rho}  \leq \frac{n}{8 \cdot \rho}
\end{align} 

It follows that all $x$ in the ball $\mathcal{B}\left(C,\frac{\sqrt{n}}{2} \right) $ that are close to the hyperplane $S_i^T \cdot (x - C) = 0$ are also close to the boundary of the ball $\{x |\| x - C\| \leq R^2\}$. 

Let $x$ be a solution to (\ref{E40}), then exists $y \in \mathcal{B}\left(x, \frac{\epsilon}{2} \right)$ such that $S_i^T\cdot (y - C) = 0$, hence (from (\ref{E42})) $y \in \mathcal{B}(C_i, R\pm \gamma)$ for all $i \in \{1, \hdots, p\}$. Therefore $x \in \mathcal{B}(C_i, R\pm \gamma \pm \frac{\epsilon}{2})$ for all $i \in \{1, \hdots, p\}$, that is $x$ is a solution to (\ref{E40}) with $\delta$ as claimed.

Similarly let $x$ be a solution to (\ref{E40}). Then exists $x \in \mathcal{B}\left(C_i, R \pm \frac{\delta}{2} \right)$ hence, from (\ref{E42}) follows that for each $i \in \{1, \hdots, p\} $ $x$ belongs to the slab $\mathcal{S}\left(S_i,C,2 \cdot \left( \frac{\delta}{2}+ \frac{n}{8 \cdot \rho}\right) \right)$ i.e. $x$ is a solution to (\ref{E38}) with $\epsilon$ as claimed. 

\end{proof}

From (\ref{E40}) one obtains the equivalent problem:

\begin{align}\label{E43}
\exists^? x \in \{0,1\}^n \hspace{0.5cm} \text{with} \hspace{0.5cm} \begin{cases}  -\frac{\delta}{2}\cdot (\|x - C_1\| + R) \leq \|x - C_1\|^2 - R^2 \leq \frac{\delta}{2}\cdot (\|x - C_1\| + R) \\ 
\vdots \\
-\frac{\delta}{2}\cdot (\|x - C_p\| + R) \leq \|x - C_p\|^2 - R^2 \leq \frac{\delta}{2}\cdot (\|x - C_1\| + R)
\end{cases}
\end{align} hence we can finally define the problem that shall be studied in the following:

\begin{align}\label{E43}
\exists^? x \in \{0,1\}^n \hspace{0.5cm} \text{with} \hspace{0.5cm} \begin{cases}  -\frac{\delta}{2} \leq \|x - C_1\|^2 - R_1^2 \leq \frac{\delta}{2}\\ 
\vdots \\
-\frac{\delta}{2}\leq \|x - C_p\|^2 - R_p^2 \leq \frac{\delta}{2}
\end{cases}
\end{align} 

\subsection{The Fundamental Result for Simultaneous Subset Sum}
Let us define:
\begin{align}\label{E45}
L_0(x) = \sum_{i = 1}^p \left( \|x - C_i\|^2 - R_i^2 \right)^2
\end{align} and we have the main theorem of this section:

\begin{theorem} Assume that $p < n$ is a power of two. If $\left\| \frac{\sum_{i=1}^p C_i}{p} - C \right\| $ is large enough, then for a given $\delta > 0$ there exists an algorithm of quasi-polynomial complexity which finds  $ x \in \{0,1\}^n$ such that $L_0(x) \in \mathcal{O}( \delta)$ if \textit{exactly one} exists. Failing to return a solution therefore means that either no solution exists either there are at least two. 
\end{theorem}

\begin{proof}
First note that
\begin{align}\label{E46}
\left( \|x - C_1\|^2 - R_1^2 \right)^2 + \left( \|x - C_2\|^2 - R_2^2 \right)^2 &= \left( \|x - C_1\|^2 - R_1^2 + \|x - C_2\|^2 - R_2^2  \right)^2 - \nonumber \\
& - 2 \cdot \left( \|x - C_1\|^2 - R_1^2 \right) \cdot \left( \|x - C_2\|^2 - R_2^2  \right)
\end{align} then

\begin{align}\label{E47}
 &\|x - C_1\|^2 - R_1^2 + \|x - C_2\|^2 - R_2^2  = \|x\|^2 + \|C_1\|^2 - 2 \cdot x^T \cdot C_1 + \|x\|^2 + \|C_2\|^2 - 2 \cdot x^T \cdot C_2 - R_1^2 - R_2^2\nonumber \\
& = 2 \cdot \left( \| x \|^2 - 2 \cdot x^T \cdot \frac{C_1 + C_2}{2} + \frac{\|C_1 + C_2\|^2}{4} \right) - \frac{\|C_1 + C_2\|^2}{2} + \|C_1\|^2 + \|C_2\|^2 - R_1^2 - R_2^2 \nonumber \\
& = 2 \cdot \left\| x - \frac{C_1 + C_2}{2} \right\|^2 + \frac{1}{2} \cdot \left\| C_1 - C_2\right\|^2 - R_1^2 - R_2^2 \nonumber \\
& = 2 \cdot \left( \left\| x-  C_{1,2}\right\|^2 - R_{1,2}^2\right)
\end{align} with $C_{1,2} =  \frac{C_1 + C_2}{2}$ and $R_{1,2}^2= \frac{R_1^2 + R_2^2}{2} - \frac{\|C_1-C_2\|^2}{4}$ and 

Replacing (\ref{E47}) in (\ref{E46}) one gets:
\begin{align}\label{E48}
\left( \|x - C_1\|^2 - R_1^2 \right)^2 + \left( \|x - C_2\|^2 - R_2^2 \right)^2 &= 4 \cdot  \left( \left\| x - C_{1,2}\right\|^2 - R_{1,2}^2 \right)^2 - 2 \cdot \left( \|x - C_1\|^2 - R_1^2\right) \cdot \left( \|x - C_2\|^2 - R_2^2 \right) \nonumber \\
& = 4 \cdot  \left( \left\| x - C_{1,2}\right\|^2 - R_{1,2}^2 \right)^2 - M_{1,2}(x)
\end{align}

where 
\begin{align}\label{E49}
M_{1,2}(x) =  2 \cdot \left( \|x - C_1\|^2 - R_1^2\right) \cdot \left( \|x - C_2\|^2 - R_2^2 \right) 
\end{align} 

As such the sum $\left(\|x - C_1\|^2 - R_1^2 \right)^2 + \left(\|x - C_2\|^2 - R_2^2 \right)^2$ is replaced in (\ref{E48}) by a formula involving only one ball.

Replacing all pairs in $L_0(x)$ with an expression similar to (\ref{E48}) one gets 
\begin{align}\label{E50}
L_0(x) = 4 \cdot\sum_{i}  \left( \left\| x - C_{i,i+1}\right\|^2 - R_{i,i+1}^2 \right)^2 - \sum_{i} M_{i,i+1}(x) 
\end{align}

Since a solution would be $x \in \{0,1\}^n$ then from (\ref{E49})
\begin{align}\label{E51}
\left| M_{1,2}(x)\right| = 2 \cdot \left| \|x - C_1\|^2 - R_1^2\right| \cdot \left| \|x - C_2\|^2 - R_2^2 \right|  \leq \bar{M}_{1,2} \in \mathcal{O} \left( \text{poly}(n, \rho) \right)
\end{align} therefore 
\begin{align}
\left| \sum_{i} M_{i,i+1}(x) \right| \leq \bar{M}_0 \in \mathcal{O} (\text{poly}(n, \rho))
\end{align} From (\ref{E55}) one obtains a possible approximation of $L_0(x)$ 

\begin{align}
\hat{L}_0(x) = 4 \cdot\sum_{i}  \left( \left\| x - C_{i,i+1}\right\|^2 - R_{i,i+1}^2 \right)^2  - M_0
\end{align} where $M_0 \in \left\{-\bar{M}_{0}, -\bar{M}_{0} + \frac{\delta}{\log_2(p)}, \hdots, \bar{M}_{0} \right\} $ i.e. $\mathcal{O} \left( \frac{\log_2(p)\cdot \text{poly}(n,\rho)}{\delta} \right)$ possible values. The reasoning is that if $x$ is a solution, then exists $M_0$ such that $\left| \sum_{i} M_{i,i+1}(x) - M_0 \right| \leq \frac{\epsilon}{\log_2(p)}$. 

Define
\begin{align}
L_1(x) = \sum_{i}  \left( \left\| x - C_{i,i+1}\right\|^2 - R_{i,i+1}^2 \right)^2
\end{align} and note that it has only half the terms of $L_0(x)$ from (\ref{E45}). The whole process is reiterated to obtain

\begin{align}\label{E55}
L_1(x) = 4 \cdot\sum_{i}  \left( \left\| x - C_{i,i+1,i+2,i+3}\right\|^2 - R_{i,i+1, i+2, i+3}^2 \right)^2 - \sum_{i} M_{i,i+1, i+2, i+3}(x). 
\end{align} Approximate $L_1(x)$ by 
\begin{align}
\hat{L}_1(x) = 4 \cdot\sum_{i}  \left( \left\| x - C_{i,i+1,i+2,i+3}\right\|^2 - R_{i,i+1,i+2,i+3}^2 \right)^2  - M_1
\end{align} where $M_1 \in \left\{-\bar{M}_{1}, -\bar{M}_{1} + \frac{\delta}{4^1 \cdot \log_2(p)}, \hdots, \bar{M}_{1} \right\} $, $\bar{M}_1$ defined similar to $\bar{M}_0$ as an upper bound on $\left| \sum_i M_{i,i+1,i+2,i+3}(x)\right| $ for $x \in \{0,1\}^n$  and $ C_{i,i+1,i+2,i+3} = \frac{C_{i,i+1} + C_{i+2,i+3}}{2}$. 

One can make at most $\log_2(p)$ such steps until it reaches
\begin{align}
L_{\log_2(p)}(x) = 4 \cdot \left( \left\| x - C_{1,2, \hdots, p}\right\|^2 - R_{1, 2, \hdots, p}^2 \right)^2 - M_{1, 2, \hdots, p}(x) 
\end{align} which is approximated by 

\begin{align}
\hat{L}_{\log_2(p)}(x) = 4 \cdot \left( \left\| x - C_{1,2, \hdots, p}\right\|^2 - R_{1, 2, \hdots, p}^2 \right)^2 - M_{\log_2(p)} 
\end{align} where  $M_{\log_2(p)} \in \left\{-\bar{M}_{\log_2(p)}, -\bar{M}_{\log_2(p)} + \frac{\delta}{4^{\log_2(p)} \cdot \log(n)}, \hdots, \bar{M}_{\log_2(p)} \right\} $, $\bar{M}_{\log_2(p)}$ is defined as an upper bound on $\left|M_{1, 2, \hdots, p}(x) \right| $ for $x \in \{0,1\}^n$  and $ C_{1, \hdots, p} = \frac{C_{1, \hdots, \frac{p}{2}} + C_{\frac{p}{2}+1, \hdots, p}}{2}$. Finally one gets

\begin{align}\label{E59}
L_0(x) &\approx 4 \cdot L_1(x) - M_0 \nonumber \\
& \approx 4 \cdot \left( 4 \cdot L_2(x) - M_1 \right) - M_0 \nonumber \\ 
& \approx \vdots \nonumber \\
& \approx 4^{\log_2(p)} L_{\log_2(p)}(x) - \sum_{q = 0}^{\log_2(p)-1} 4^q \cdot M_q  \nonumber \\
& \approx 4^{\log_2(p)+1} \cdot \left( \left\| x - C_{1,2, \hdots, p}\right\|^2 - R_{1, 2, \hdots, p}^2 \right)^2 - \sum_{q = 0}^{\log_2(p)} 4^q \cdot M_q  =: \tilde{L}_0(x,M_0, \hdots, M_{\log_2(p)})
\end{align}

Note that if $x \in \{0,1\}^n$ then exists $M_0, \hdots, M_{\log_2(p)}$ such that 
\begin{align}\label{E60}
\left| L_0(x) - \tilde{L}(x)\right| \leq \sum_{i=0}^{\log_2(p)} 4^i \cdot \frac{\delta}{4^i \cdot \log_2(p)} = \delta \cdot \frac{\log_2(p)+1}{\log_2(p)} \leq 2 \cdot \delta 
\end{align} Recall that $M_q \in \left\{-\bar{M}_q, -\bar{M}_q+\frac{\delta}{4^q \cdot \log_2(p)}, \hdots, \bar{M}_q \right\}$ therefore, because $\bar{M}_q \in \mathcal{O}(\text{poly}(n, \rho))$, $M_q$ can take $\mathcal{O} \left( \frac{\text{poly}(n, \rho) \cdot 4^i \cdot \log_2(p)}{\delta}\right)$ values. This is upper bounded by $\mathcal{O} \left( \frac{\text{poly}(n, \rho)}{\delta}\right)$ values. 

Consider the following problem:
\begin{align}\label{E61}
\exists^? x \in \{0,1\}^n, M_q \in  \left\{-\bar{M}_q, -\bar{M}_q+\frac{\delta}{4^q \cdot \log_2(p)}, \hdots, \bar{M}_q \right\} \hspace{0.5cm} \text{with} \hspace{0.5cm} \tilde{L}(x, M_0, \hdots, M_{\log_2(p)}) \leq 3\cdot \delta
\end{align} From (\ref{E60}) follows that if exists $x \in \{0,1\}^n $ with $L_0(x) \leq \delta$ then exists $M_q$ such that $\tilde{L}_0(x, \hdots) \leq 3 \cdot \delta$. On the other hand, if exists $M_q's$ such that for some $x \in \{0,1\}^n $ one has $\tilde{L}_0(x) \leq 3\cdot \delta$ then $L_0(x) \leq 5 \cdot \delta$. 

We end the proof with the quasi-polynomial algorithm for solving (\ref{E61}). In order to do so, one takes all values of $M_q$ and solves $\tilde{L}_0(x, \hdots) \leq 3\cdot \delta$ as follows. There will be at most $\mathcal{O}\left( \left(\frac{\text{poly}(n, \rho)}{\delta} \right)^{\log_2(p)+1}\right)$ such problems. For fixed $M_0, \hdots, M_{\log_2(p)}$ one has $\tilde{L}_0(x, \hdots) \leq \delta$ is, see (\ref{E59})
\begin{align}
\sqrt{\frac{-\delta + \sum_{q=1}^{\log_2(p)+1} 4^q \cdot M_q }{4^{\log_2(p)+1}}} \leq \|x - C_{1, \hdots, p}\|^2 - R_{1, \hdots, p}^2 \leq \sqrt{\frac{\delta + \sum_{q=1}^{\log_2(p)+1} 4^i \cdot M_i }{4^{\log_2(p)+1}}} 
\end{align} which is 
\begin{align}\label{E63}
\frac{\sqrt{\frac{-\delta + \sum_{q=1}^{\log_2(p)+1} 4^q \cdot M_q }{4^{\log_2(p)+1}}} }{\|x - C_{1, \hdots, p}\| + R_{1, \hdots, p}} \leq \|x - C_{1, \hdots, p}\| - R_{1, \hdots, p} \leq \frac{\sqrt{\frac{\delta + \sum_{q=1}^{\log_2(p)+1} 4^q \cdot M_q }{4^{\log_2(p)+1}}}}{\|x - C_{1, \hdots, p}\| + R_{1, \hdots, p}}
\end{align} Since 
\begin{align}
B_L = \left| \frac{\sqrt{n}}{2} - \|\frac{1}{2}\cdot 1_{n \times 1} - C_{1, \hdots, p}\| \right| + R_{1, \hdots, p} \leq \|x - C_{1, \hdots, p}\| + R_{1, \hdots, p} \leq \frac{\sqrt{n}}{2} + \|\frac{1}{2}\cdot 1_{n \times 1} - C_{1, \hdots, p}\| + R_{1, \hdots, p}  = B_U
\end{align} then for some $\epsilon > 0$ let $B \in \{B_L, B_L + \epsilon, \hdots, B_U\}$. If exists $x \in \{0,1\}^n$ a solution to (\ref{E63}), then exists $B$ with $|B - \|x - C_{1, \hdots, p}\| - R_{1, \hdots, p}| \leq \epsilon$. To seach for such an $x$ we therefore solve

\begin{align}\label{E64}
\frac{\sqrt{\frac{-\delta + \sum_{q=1}^{\log_2(p)+1} 4^q \cdot M_q }{4^{\log_2(p)+1}}} }{B} \leq \|x - C_{1, \hdots, p}\| - R_{1, \hdots, p} \leq \frac{\sqrt{\frac{\delta + \sum_{q=1}^{\log_2(p)+1} 4^q \cdot M_q }{4^{\log_2(p)+1}}}}{B} \hspace{0.5cm} \forall B \in \{B_L, B_L + \epsilon, \hdots, B_U\}
\end{align} i.e. solving $\frac{B_U-B_L}{\epsilon}$ problems. Each of them is solved using the above results in Lemma \ref{L4.1} and Theorem \ref{T3}. If for any $B^{\star}$ a solution is found, the solution is indeed validated if $|B - \|x - C_{1, \hdots, p}\| - R_{1, \hdots, p}| \leq \epsilon$. 

The requirement in hypothesis about $C_{1, \hdots, p} = \frac{\sum_{i=1}^p C_i}{p}$ being far enough from $Q_n = [0,1]^n$ assures that the approximation of the sphere $\partial \mathcal{B}\left( C_{1, \hdots, p}, R_{1, \hdots, p} \right)$ inside the ball $\mathcal{B}\left( \frac{1}{2}\cdot 1_{n \times 1}, \frac{\sqrt{n}}{2}\right)$ by the hyperplane yields a small error, as presented in Lemma \ref{L4.1}. 

Finally, let $x$ be a solution to (\ref{E63}). Then if $\left| M_q - \sum_{i} M_{i, \hdots, i+2^q-1}(x)\right| \leq \frac{\delta}{4^q \cdot \log_n(p)}$ for all $q \in \{0, \hdots, \log_2(p)\}$ it follows that $L_0(x) \in \mathcal{O}(\delta)$. 

Otherwise, if no solution is found to (\ref{E63}), or none found meets $\left| M_q - \sum_{i} M_{i, \hdots, i+2^q-1}(x)\right| \leq \frac{\delta}{4^q \cdot \log_n(p)}$ for all $q \in \{0, \hdots, \log_2(p)\}$, then one can claim that either there is no $x \in \{0,1\}^n$ for which $L_0(x) \in \mathcal{O}(\delta)$, either there are at least two.  

\end{proof}

\section{Conclusion}
In this paper the Subset Sum Problem (SSP) was analyzed from a geometrical perspective. It was shown that it is equivalent with checking if a certain slice of the positive unit hypercube contains any vertex of the hypercube. 

We showed that the computational difficulty of this problem is due to the fact that the normal vector of the hyperplane associated to the SSP with entries on $m$ bits can have exponentially close vertices which are not solutions. We relaxed the SSP by replacing the infinitely thin slice of the hypercube with a "thicker" one that we've defined and called "slab". 

 Our approach to the relaxed problem was to approximate the normal vector with a vector with entries on less bits. As such we gave a FPTAS which either proves that an "$\epsilon$-thick slice" does not contain any vertex, either finds a vertex in the "$4\cdot \epsilon$-thick slice" of the positive unit hypercube. 

Finally, the approach used so far, i.e. the relaxation of the geometric problem of searching for a vertex of the hypercube in the hyperplane sectioning it, to the problem of searching for a vertex in a slab, is further developed. It is shown that a hyperplane sectioning the hypercube can be approximated with a sphere of large radius centered far enough. As such the slab is then approximated by a thick spherical cap. 

The Simultaneous Subset Sum Problem requires the search for a hypercube vertex in the intersection of say $p$ hyperplanes. This is relaxed to the search for a hypercube vertex in intersection of the associated slabs then the intersection of the associated thick spherical shells. We prove the existence of a quasi-polynomial algorithm for a particular case of the later problem. For $\delta > 0$ arbitrary chosen, the proposed algorithm has complexity $\mathcal{O} \left( \text{poly}\left( \frac{1}{\delta}, n \right) \right)^{\log_2(p)}$ and either gives a point in the intersection of the $p$ $\mathcal{O}(\delta)$-thick spherical shells either, by failing to do so, proves that either no such point exists either that there exists at lest two such points.

\end{document}